\newtheorem{theorem}{Theorem}
\newtheorem{corollary}{Corollary}[theorem]
\newcommand{\pd}{\partial}                  % Partial derivative
\newcommand{\dd}{\mathrm{d}}                % General derivative
\newcommand{\ds}{\displaystyle}             % Display style
\newcommand{\M}{\mathrm{M}}                 % Map
\newcommand{\K}{\mathcal{K}}                % Invariant of map
\newcommand{\h}{\mathrm{H}}                 % Hamiltonian
\newcommand{\en}{\mathrm{E}}                % Energy level
\newcommand{\ads}{\mathrm{arcds}}
\newcommand{\const}{\mathrm{const}}
\begin{document}

%===============================================================================
\title{Betatron frequency and the Poincar\'e rotation number}
\author{Sergei Nagaitsev}
\affiliation{Fermilab, Batavia, IL 60510, USA}
\affiliation{The Enrico Fermi Institute, The University of Chicago, Chicago, IL
60637, USA}
\author{Timofey Zolkin}
\affiliation{Fermilab, Batavia, IL 60510, USA}

\date{\today}
%===============================================================================

%===============================================================================
\begin{abstract}
Symplectic maps are routinely used to describe single-particle dynamics in
circular accelerators.
In the case of a linear accelerator map, the rotation number (the betatron
frequency) can be easily calculated from the map itself.
In the case of a nonlinear map, the rotation number is normally obtained
numerically, by iterating the map for given initial conditions, or through a
normal form analysis, a type of a perturbation theory for maps.
Integrable maps, a subclass of symplectic maps, allow for an analytic evaluation
of their rotation numbers.
In this paper we propose an analytic expression to determine the rotation number
for integrable symplectic maps of the plane and present several examples, relevant to accelerators.
These new results can be used to analyze the topology of the accelerator
Hamiltonians as well as to serve as the starting point for a perturbation theory
for maps.
\end{abstract}
%===============================================================================

%===============================================================================
% Use showkeys class option if keyword display desired
\maketitle
%\tableofcontents
%===============================================================================

% INTRODUCTION ================================================================%
%==============================================================================%
%==============================================================================%
\section{Introduction}

%-------------------------------------------------------------------------------
The first mention of the betatron frequency was in the 1941 pioneering work by
Kerst and Serber~\cite{Kerst_PhysRev.60.53}, where they defined it as the
fractional number of particle oscillations around the orbit per one revolution
period in a betatron (a type of an induction accelerator).
Later, the theory of the alternating-gradient (AG)
synchrotron~\cite{courant1958theory} demonstrated the existence of an integral
of motion (the so-called Courant-Snyder invariant) for particles in an AG
synchrotron and established a powerful connection between the modern AG focusing
systems and linear symplectic maps, thus connecting the betatron frequency and
the Poincar\'e rotation number~\cite{Poincare}. 

%-------------------------------------------------------------------------------
In modern accelerators (for example, in the LHC) particles are stored for up to 
$10^9$ revolutions and understanding their dynamics is crucially important
for maintaining long-term particle stability~\cite{Todesco1996, Papaphil}.
One important parameter of particle dynamics in an accelerator is the betatron
frequency and its dependence on a particle's amplitude.
It turns out that the accelerator focusing systems conserve the Courant-Snyder
invariant only approximately and there is a need to analyze the conditions for
stable particle dynamics.
Over the recent years, several methods were developed to analyze the particle
motion in accelerator systems, using either numeric tools, like the Frequency
Map Analysis~\cite{Laskar}, or the
Normal Form Analysis~\cite{Bazzani:1994ks, Turchetti}, a type of a perturbation
theory, which uses a linear map and a Courant-Snyder invariant as a starting
point.

%-------------------------------------------------------------------------------
At the same time, there has been continuous interest, starting with
E.~McMillan~\cite{mcmillan1971problem}, in making the accelerator maps nonlinear,
yet integrable~\cite{Cary,Danilov1,DN1, DanNagNBody}.
However, there does not exist an analytic method to calculate the betatron
frequency (the Poincar\'e rotation number) for nonlinear symplectic integrable
maps. This present paper is set to remedy this deficiency.

% BETATRON FREQUENCY ==========================================================%
%==============================================================================%
%==============================================================================%
\section{Betatron frequency}

%-------------------------------------------------------------------------------
For a one degree-of-freedom time-independent system, the Hamiltonian function,
$\h[p,q;t] = \en$, is the integral of the motion.
If the motion is bounded, it is also periodic, and the period of oscillations
can be determined by integrating
\begin{equation}
\label{math:T}
    T(\en) = \oint\left(\frac{\pd \h}{\pd p}\right)^{-1}\,\dd q,
\end{equation}
where $p=p(\en,q)$~\cite{Lichtenberg1992}.
The oscillation period and its dependence on initial conditions is one of the
key properties of the periodic motion.

%-------------------------------------------------------------------------------
Let us now consider a symplectic map of the plane
(corresponding to a one-turn map of an accelerator), $\M:\mathbb{R}^2\rightarrow\mathbb{R}^2$,
\[
    (q',p') = \M\,(q,p),
\]
where the prime symbols ($'$) indicate the transformed phase space coordinates.
Suppose that the sequence, generated by a repeated application of the map,
\[
    (q_0,p_0)\rightarrow(q_1,p_1)\rightarrow
    (q_2,p_2)\rightarrow(q_3,p_3)\rightarrow
    \ldots
\]
belongs to a closed invariant curve.
We do not describe how this map is obtained (see, for example,~\cite{Intro}) but
let us suppose that we know the mapping equations.
Let $R_n$ be the rotation angle in the phase space $(q, p)$ around a stable
fixed point between two consecutive iterations $(q_n,p_n)$ and 
$(q_{n+1},p_{n+1})$.
Then, the limit, when it exists, 
\begin{equation}
\label{math:nu}
	\nu = \lim_{N \to \infty} \frac{1}{2\,\pi\,N}\,\sum_{n=0}^{N} R_n 
\end{equation}
is called the rotation number (the betatron frequency of the one-turn map) for
that particular orbit of the map $\M$~\cite{Dilao1996}.
Unlike Eq.~(\ref{math:T}), which allows to express the oscillation period
analytically, Eq.~(\ref{math:nu}) can be only evaluated numerically for each
orbit. 
Let us now suppose that there exists a non-constant real-valued continuous
function $\K(q,p)$, which is invariant under $\M$.
The function $\K(q,p)$ is called {\it integral} and the map is called
{\it integrable}.
In this paper, we are describing the case, for which the level sets $\K=\const$
are compact closed curves (or sets of points) and for which the identity
\begin{equation}
\label{math:int_map}
    \K(q',p') = \K(q,p)
\end{equation}
holds for all $(q,p)$.
There are many examples of integrable maps, including the famous McMillan
map~\cite{mcmillan1971problem}, described below.
The dynamics is in many  ways similar to that of a continuous system, however,
Eq.~(\ref{math:T}) is not directly applicable since the integral $\K(q,p)$ is
not the Hamiltonian function.
Below, we will present an expression (the Danilov theorem) to obtain the
rotation number from $\K(q,p)$ for an integrable map, $\M$.

%-------------------------------------------------------------------------------
The {\bf Arnold-Liouville theorem} for integrable
maps \cite{arnold1968ergodic,veselov1991integrable,meiss1992symplectic}
states that (1) the action-angle variables exist and (2) in these variables,
consecutive iterations of integrable map $\M$ lie on nested circles of radius
$J$ and that the map can be written in the form of a twist map,
\begin{equation}
\label{math:twist}
    \begin{bmatrix}
        J_{n+1}	\\[0.2cm]  \theta_{n+1}
    \end{bmatrix}
    =
    \begin{bmatrix}
        J_n     \\[0.2cm] \theta_n + 2\,\pi\,\nu(J) \mod 2\,\pi
    \end{bmatrix},
\end{equation}
where $|\nu(J)| \leq 0.5$ is the rotation number, $\theta$ is the angle variable
and $J$ is the action variable, defined by the map $\M$ as
\begin{equation}
\label{math:J}
	J = \frac{1}{2\,\pi}\,\oint p\,\dd q.
\end{equation}
Thus, in this paper, we would like to consider the following question: how does
one determine the rotation number, $\nu(\K)$, from the known integral, 
$\K(q,p)$, and the known integrable map, $\M$?
In addition, in the "Examples" section we propose how to use this theorem
when only an approximate invariant is known.

% DANILOV THEOREM =============================================================%
%==============================================================================%
%==============================================================================%
\section{Danilov theorem}

%-------------------------------------------------------------------------------
\begin{theorem}[{\bf Danilov theorem}~\cite{danilovPRIVATE}] 
Suppose a symplectic map of the plane,
\[
    (q',p') = \M\,(q,p),
\]
is integrable with the invariant (integral) $\K(q, p)$, then its Poincar\'e rotation
number is
\begin{equation}
\label{math:Danilov}
\nu(\K) =
    \int_q^{q'}
	    \left(\frac{\pd \K}{\pd p}\right)^{-1}\,\dd q
	\Bigg/
	\oint
	    \left(\frac{\pd \K}{\pd p}\right)^{-1}\,\dd q,
\end{equation}
where the integrals are evaluated along the invariant curve, $\K(q, p)$.
\end{theorem}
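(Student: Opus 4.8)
\medskip
\noindent\textbf{Proof proposal.} The plan is to show that on each regular level curve $\Gamma_c=\{\K=c\}$ the map $\M$ is, in a suitable coordinate, a rigid rotation, and to identify that coordinate with the ``time'' of the Hamiltonian flow one obtains by \emph{treating $\K$ itself as a Hamiltonian}. So first I would introduce the flow $\phi_t$ of $\K$ with respect to $\omega=\dd p\wedge\dd q$, i.e.\ the solutions of $\dot q=\pd\K/\pd p$, $\dot p=-\pd\K/\pd q$. Since $\K$ is constant along $\phi_t$, this flow preserves every $\Gamma_c$; assuming $c$ is a regular value, $\Gamma_c$ is a smooth simple closed curve and $\phi_t|_{\Gamma_c}$ is the flow of a nowhere-vanishing field on a circle, hence, after a change of variable, the uniform translation $s\mapsto s+t$ in a ``flow-time'' coordinate $s$ defined modulo the period. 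By the same computation that yields Eq.~(\ref{math:T}) — along a trajectory $\dd t=\dd q/\dot q=(\pd\K/\pd p)^{-1}\dd q$ — that period is $T(c)=\oint(\pd\K/\pd p)^{-1}\dd q$, the denominator of Eq.~(\ref{math:Danilov}).

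Next I would establish the key commutation property: since $\M$ is symplectic ($\M^{*}\omega=\omega$) and $\K\circ\M=\K$, it carries the Hamiltonian vector field of $\K$ to that of $\K\circ\M^{-1}=\K$, i.e.\ $\M$ commutes with the flow, $\M\circ\phi_t=\phi_t\circ\M$ for all $t$. (This is the discrete analogue of the statement that a canonical transformation preserving the Hamiltonian preserves its equations of motion.) In the flow-time coordinate on $\Gamma_c$ this reads $f(s+t)=f(s)+t$ for the circle map $f=\M|_{\Gamma_c}$; setting $s=0$ forces $f(s)=s+\tau(c)$ with a constant shift $\tau(c)$. Hence $\M|_{\Gamma_c}$ is conjugate to the rigid rotation by $2\pi\,\tau(c)/T(c)$, and since $s$ winds exactly once around $\Gamma_c$ — just like the polar angle about the enclosed fixed point used in Eq.~(\ref{math:nu}) — the Poincar\'e rotation number, being a conjugacy invariant, equals $\nu(\K)=\tau(c)/T(c)$.

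To finish I would identify $\tau(c)$ with the numerator of Eq.~(\ref{math:Danilov}): by construction $\tau(c)$ is the flow-time needed to pass from any $(q,p)\in\Gamma_c$ to its image $(q',p')=\M(q,p)$, so integrating $\dd t=(\pd\K/\pd p)^{-1}\dd q$ along the arc of $\Gamma_c$ swept out by $\phi_t$ between these two points gives $\tau(c)=\int_q^{q'}(\pd\K/\pd p)^{-1}\dd q$, independent of the chosen starting point; dividing by $T(c)$ is exactly Eq.~(\ref{math:Danilov}). Along the way one should dispose of the minor points: the integrals are improper at the turning points of $q$ (where $\pd\K/\pd p=0$) but converge there just as in the classical Eq.~(\ref{math:T}); the orientation of integration fixes $\nu$ only modulo $1$, consistent with the normalization $|\nu|\le 1/2$ in Eq.~(\ref{math:twist}); and the degenerate level sets that are single points are the elliptic fixed points themselves, where $\nu$ is instead read off from the linearized map.

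The hard part, and the place where symplecticity is genuinely used, is the commutation step and the regularity it presupposes: one needs $\Gamma_c$ to be a bona fide invariant circle (a regular level set) on which $\M$ restricts to an orientation-preserving homeomorphism, and one must exclude that $\M|_{\Gamma_c}$ is some more general circle map — it is area preservation that pins it down to a shift of the $\K$-flow, so that the rotation number depends on nothing beyond the integral $\K$ itself. Everything else is a transcription of the action-angle picture for one-degree-of-freedom Hamiltonians into the discrete-time setting guaranteed by the Arnold-Liouville theorem quoted above.
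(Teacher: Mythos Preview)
Your argument is correct and follows the same architecture as the paper's proof: treat $\K$ as a Hamiltonian, generate its flow $\phi_t$, identify $\M|_{\Gamma_c}$ with the time-$\tau$ map of that flow, and read off $\nu=\tau/T$ via the change of variables $\dd t=(\pd\K/\pd p)^{-1}\dd q$.

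The one substantive difference is in how the identification $\M|_{\Gamma_c}=\phi_{\tau(c)}$ is justified. The paper simply asserts that such a $\tau(\K)$ exists ``by the Arnold--Liouville theorem'' and moves on. You instead prove it directly from first principles: symplecticity plus $\K\circ\M=\K$ give $\M_*X_\K=X_\K$, hence $\M\circ\phi_t=\phi_t\circ\M$, and a circle homeomorphism commuting with a transitive one-parameter group of translations is itself a translation. This is more elementary and more self-contained --- it isolates exactly where area preservation enters and avoids appealing to the full discrete Arnold--Liouville machinery, which in the paper's version is doing more work than is strictly needed. The paper's route is shorter on the page; yours makes the mechanism transparent and would survive in settings where one might hesitate to invoke Arnold--Liouville as a black box. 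Your remarks on turning points, orientation, and degenerate levels are also more careful than anything the paper spells out.
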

%-------------------------------------------------------------------------------

%-------------------------------------------------------------------------------
\begin{figure}[th!]\centering
\includegraphics[width=\linewidth]{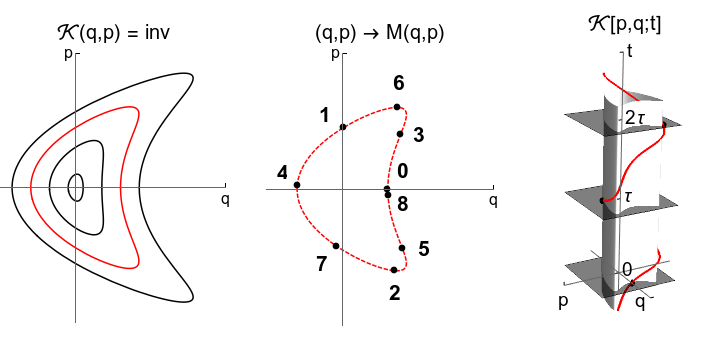}
\caption{\label{fig:DanilovTh}
	Constant level sets of the integral $\K(q,p)=\const$ (left).
	A particular curve representing a level set of $\K$
	and several iterates of the map $\M$ (center).
	A three-dimensional phase space, $(q,p)$ + time, of the 
	system~(\ref{math:Kham}) (right).
	Dark gray planes $t=0,\tau,2\tau,\ldots$ represent stroboscopic 
	Poincar\'e section of the continuous flow of the system 
	(red curve) which is identical to map $\M$.
	}
\end{figure}
%-------------------------------------------------------------------------------

%-------------------------------------------------------------------------------
\begin{proof} 
Consider the following system of differential equations:
\begin{equation}
\label{math:Kham}
	\frac{\dd Q}{\dd t} =  \frac{\pd\K(Q,P)}{\pd P}, \qquad\qquad
	\frac{\dd P}{\dd t} = -\frac{\pd\K(Q,P)}{\pd Q}.
\end{equation}
We notice that $\K(Q,P)$ does not change along a solution of the system, because it is an integral of the motion, meaning
\begin{equation}
\label{math:Intg}
	\frac{\dd \K}{\dd t} = \frac{\pd\K}{\pd Q} \frac{\dd Q}{\dd t} + \frac{\pd\K}{\pd P} 	\frac{\dd P}{\dd t} = 0
\end{equation}
for any solution $Q(t)$ and $P(t)$.
Let $q(t)$ and $p(t)$ be the solutions of the system~(\ref{math:Kham}) with the
following initial conditions $q(0)=q_0$ and $p(0)=p_0$.
Define a new map, $\widetilde{\M}(q,p)$ (see Fig.~\ref{fig:DanilovTh})
\begin{equation}
    (q',p') =
    \widetilde{\M} (q,p) =
    \left(q(\tau),p(\tau)\right) 
\end{equation}
where $\tau$ is a discrete time step.
For a given $\K$, which is an integral of both $\M$ and $\widetilde{\M}$, one
can always select $\tau(\K)$ such that the maps $\M(q,p)$ and
$\widetilde{\M}(q,p)$ are identical.
This follows from the Arnold-Liouville theorem.
Since $\K(q,p)$ is compact and closed, the functions $q(t)$ and $p(t)$ are 
periodic with a period $T(\K)$.
By its definition,
\begin{equation}
    \tau = \nu(\K)\,T(\K).
\end{equation}
Let us now calculate $\nu(\K)$:
\begin{equation}
\label{math:Dan6}
\begin{array}{l}
\displaystyle
    \nu(\K) \equiv \frac{\tau}{T} = 
    \frac{\int_{q}^{q'}\,\dd t}{\oint\,\dd t} =
    \frac{\int_{q}^{q'}
        \left( \frac{\dd q}{\dd t} \right)^{-1}
    \,\dd q}
     {\oint
        \left( \frac{\dd q}{\dd t} \right)^{-1}
    \,\dd q}                        \\[0.65cm]
\displaystyle \qquad\,\,
    = \frac{\int_{q}^{q'}
        \left(\frac{\pd \K}{\pd p}\right)^{-1}
    \,\dd q}
    {\oint
        \left(\frac{\pd \K}{\pd p}\right)^{-1}
    \,\dd q} .
\end{array}
\end{equation}
Q.E.D.
\end{proof}
%-------------------------------------------------------------------------------

%-------------------------------------------------------------------------------
\begin{figure}[t]\centering
\includegraphics[width=\linewidth]{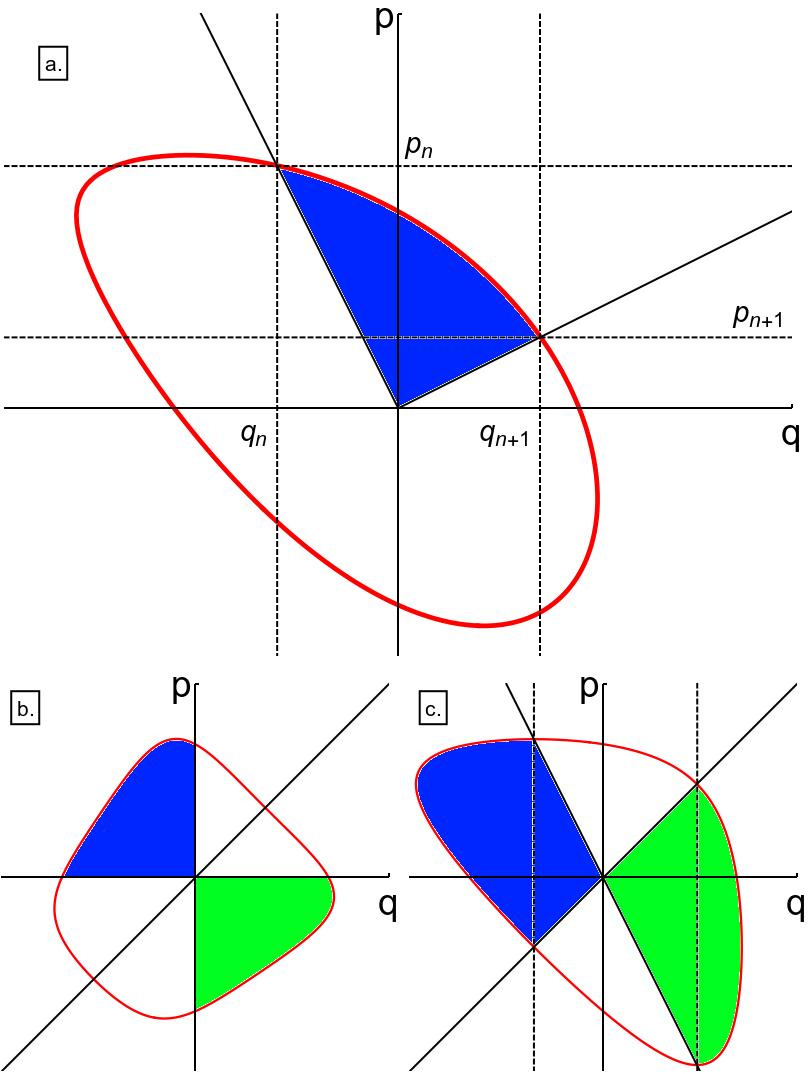}
\caption{\label{fig:partial_action}
	The partial action is defined as a sector area (blue) for one map iteration, divided by ${2\,\pi}$ (a.).
	Convenient choices of the partial action for mappings in McMillan form: an area under the curve in II (blue) or IV (green) quadrants (b.), and, areas for initial conditions in a form of $(q_0,q_0)$ (c.).
	} 
\end{figure}
%-------------------------------------------------------------------------------

%-------------------------------------------------------------------------------
\begin{corollary}

\begin{equation}
\label{math:Dan2}
    \nu(\K)=\frac{\dd J'}{\dd J},
\end{equation}
where 
\begin{equation}
    J'(\K) = \frac{1}{2\,\pi}\,\int_q^{q'} p(\K,q)\,\dd q.
\end{equation}
is the partial action calculated as a sector integral
(see Fig.~\ref{fig:partial_action}) around the stable fixed point.
\end{corollary}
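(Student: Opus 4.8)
The plan is to prove the corollary by differentiating both actions with respect to the invariant $\K$ and then recognizing the two integrals that appear in the Danilov theorem. Write $J=J(\K)$ and $J'=J'(\K)$. Since (as will be checked below) $\dd J/\dd\K$ equals $1/(2\pi)$ times the period $T(\K)$ of the flow~(\ref{math:Kham}), it is strictly positive, so $J(\K)$ is monotone and the chain rule gives $\dd J'/\dd J=(\dd J'/\dd\K)\big/(\dd J/\dd\K)$; it therefore suffices to evaluate the two $\K$-derivatives. The only computational ingredient needed is implicit differentiation of the identity $\K(q,p(\K,q))\equiv\K$ at fixed $q$, which gives $\pd p/\pd\K=(\pd\K/\pd p)^{-1}$ — exactly the integrand of~(\ref{math:Danilov}).

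First I would differentiate $J=\frac{1}{2\pi}\oint p(\K,q)\,\dd q$ under the integral sign. Splitting the closed level curve $\K=\const$ into its upper and lower branches, joined at the two turning points, the contributions of the moving turning points cancel (the branches coincide there), and one is left with $\dd J/\dd\K=\frac{1}{2\pi}\oint(\pd\K/\pd p)^{-1}\,\dd q$ — precisely $1/(2\pi)$ times the denominator of~(\ref{math:Danilov}), and, as in the proof above, equal to $T(\K)/(2\pi)$. Differentiating $J'(\K)=\frac{1}{2\pi}\int_q^{q'}p(\K,q)\,\dd q$ in the same way, the term coming from the integrand is $\frac{1}{2\pi}\int_q^{q'}(\pd\K/\pd p)^{-1}\,\dd q$ — exactly $1/(2\pi)$ times the numerator of~(\ref{math:Danilov}) — together with boundary terms of the form $\frac{1}{2\pi}\,p\,(\dd q/\dd\K)$ evaluated at the endpoints $(q',p')$ and $(q,p)$, which appear because both endpoints drift as the level curve is deformed.

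The hard part, and the only step that is not routine, is to show that these boundary terms vanish. The clean reason is geometric: by construction $2\pi J'$ is the area of the sector anchored at the stable fixed point (Fig.~\ref{fig:partial_action}), and as $\K$ varies the two straight edges of that sector merely slide along their own directions and sweep no area, so the entire variation comes from the moving arc — which is the numerator integral above. For the maps in McMillan form discussed in the Examples section this is immediate: one takes the initial point on the $q$-axis, where $p=0$, whence its image lies on the $p$-axis, where $q\equiv 0$, so $p\,\dd q$ vanishes at both endpoints for every $\K$ and the boundary terms are absent. (Equivalently, since $\M$ is exact symplectic the boundary terms equal $\dd/\dd\K$ of the generating function of $\M$ evaluated along the chosen curve of initial conditions, which is constant for such natural choices.) Granting this, division of the two $\K$-derivatives yields $\dd J'/\dd J=\big(\int_q^{q'}(\pd\K/\pd p)^{-1}\dd q\big)\big/\big(\oint(\pd\K/\pd p)^{-1}\dd q\big)=\nu(\K)$ by the Danilov theorem, which is the assertion.
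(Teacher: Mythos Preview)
Your overall strategy is exactly the paper's: compute $\dd J/\dd\K$ and $\dd J'/\dd\K$ separately, identify them with $1/(2\pi)$ times the denominator and numerator of~(\ref{math:Danilov}), and divide. The treatment of the full action is identical to the paper's Eq.~(\ref{math:Dan4}).

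The one substantive difference is how the boundary terms from differentiating $J'$ are disposed of. The paper does not argue that $p\,\dd q/\dd\K$ vanishes at the endpoints; instead it invokes the Hamiltonian identity $\int_q^{q'}(\pd\K/\pd p)^{-1}\dd q=-\int_p^{p'}(\pd\K/\pd q)^{-1}\dd p$ (Eq.~(\ref{math:Dan3})) and, averaging the two representations, applies Leibniz to obtain directly
\[
\frac{1}{2\pi}\int_q^{q'}\left(\frac{\pd\K}{\pd p}\right)^{-1}\dd q
=\frac{1}{2\pi}\,\frac{\dd}{\dd\K}\left(\frac{qp-q'p'}{2}+\int_q^{q'}p\,\dd q\right)=\frac{\dd J'}{\dd\K},
\]
i.e.\ the triangle correction $\tfrac{1}{2}(qp-q'p')$ is \emph{kept} as part of the sector area rather than argued away. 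Your route---killing the boundary terms by choosing the initial point on the $q$-axis so that $p=0$ there and $q'\equiv 0$ at the image---is cleaner for McMillan-form maps and makes the dependence on the choice of initial condition explicit, but your general ``the radii slide along themselves'' argument tacitly assumes that \emph{both} endpoints stay on fixed rays as $\K$ varies, which is not automatic for the image point when $\nu$ depends on $\K$. The paper's use of~(\ref{math:Dan3}) sidesteps this by building the triangle pieces into $J'$ from the start.
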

%-------------------------------------------------------------------------------

%-------------------------------------------------------------------------------
\begin{proof}
First, we will consider the denominator in Eq. (\ref{math:Dan6}):
\begin{equation}
\label{math:Dan4}
   \frac{1}{2\,\pi}\,\oint \left(\frac{\pd \K}{\pd p}\right)^{-1}\,\dd q = \frac{1}{2\,\pi}\,\frac{\dd}{\dd \K}\,\oint p\,\dd q = \frac{\dd J}{\dd \K}.
\end{equation}
Second, we will evaluate the numerator.  Using the equations of motion in
Eq.~(\ref{math:Kham}), we notice that 
\begin{equation}
\label{math:Dan3}
   \int_q^{q'} \left(\frac{\pd \K}{\pd p}\right)^{-1}\,\dd q = - \int_p^{p'} \left(\frac{\pd \K}{\pd q}\right)^{-1}\,\dd p .
\end{equation}
Now, we will utilize the Leibniz integral rule together with
Eq.~(\ref{math:Dan3}) to obtain
\begin{equation}
\label{math:Dan5}
\begin{array}{l}
\displaystyle
     \frac{1}{2\,\pi}\,\int_q^{q'}
        \left(\frac{\pd \K}{\pd p}\right)^{-1}
    \,\dd q =
     \frac{1}{2\,\pi}\times                     \\[0.65cm]
\displaystyle
    \,\,\,\,\,\,\frac{\dd }{\dd \K}
        \left(  \frac{q\,p - q'\,p'}{2} +
                \int_q^{q'} p\,\dd q
        \right) = \frac{\dd J'}{\dd \K}.
\end{array}
\end{equation}
Finally, by combining Eqs.~(\ref{math:Dan4}) and (\ref{math:Dan5}) we obtain the
Eq.~(\ref{math:Dan2}).
\end{proof}
%-------------------------------------------------------------------------------

%-------------------------------------------------------------------------------
\begin{corollary}
For a linear map ($\nu = \const$),
\begin{equation}
\label{math:Dan7}
    \nu = J'/J.
\end{equation}
\end{corollary}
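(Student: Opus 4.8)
The plan is to obtain this directly by integrating the relation $\nu(\K)=\dd J'/\dd J$ established in the preceding corollary, Eq.~(\ref{math:Dan2}), and then fixing the single constant of integration. First I would invoke the hypothesis of the statement: for a linear symplectic map the rotation number is the same on every invariant ellipse, so $\nu$ is independent of the amplitude and hence of $\K$ (equivalently, of $J$). Therefore Eq.~(\ref{math:Dan2}) reads $\dd J'/\dd J=\nu=\const$ along the one-parameter family of invariant curves, which integrates to $J'(J)=\nu\,J+C$ for some constant $C$.

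Next I would pin down $C$ by letting the invariant curve shrink to the stable fixed point. There $\K$ attains its extremal value; the enclosed area $\oint p\,\dd q$ tends to $0$, so $J\to 0$ by Eq.~(\ref{math:J}); and the sector area defining $J'$ (see Fig.~\ref{fig:partial_action}) collapses as well, so $J'\to 0$. Hence $C=0$ and $J'=\nu\,J$, i.e. $\nu=J'/J$, which is the claim.

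The only point requiring care is the behaviour at the fixed point — one must check that $J$ and $J'$ are genuinely continuous and vanish there, and that the family $\K=\const$ covers a full neighbourhood of the fixed point so that integrating $\dd J'/\dd J$ down to $J=0$ is legitimate. For a linear map this is transparent: passing to Courant--Snyder coordinates turns the invariant curves into concentric circles, so that (up to normalization) $J=\K$ and $J'$ is simply the circular sector of fixed angular fraction $\nu$, giving $J'=\nu\,\K=\nu\,J$ outright. This also suggests an alternative, integration-free route: for a linear map $J$ and $J'$ are the areas of similar figures — a full ellipse and a fixed-fraction sector of it — which scale identically under dilation of the curve, so their ratio is manifestly independent of $\K$ and equals the common value $\dd J'/\dd J=\nu$.
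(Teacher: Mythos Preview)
Your argument is correct and coincides in substance with the paper's: both integrate Eq.~(\ref{math:Dan2}), $\dd J'/\dd J=\nu=\const$, to $J'=\nu\,J+C$ and fix $C=0$ at the stable fixed point. The paper packages the same integration through the Hamiltonian, writing $\h(J)=\nu\,J$ and then invoking Eq.~(\ref{math:Dan2}), which amounts to the identical step.
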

%-------------------------------------------------------------------------------
%-------------------------------------------------------------------------------
\begin{proof}
Since $\nu = \const$, the Hamiltonian function is $\h(J)= \nu\,J$.
Using Eq.~(\ref{math:Dan2}), we obtain Eq.~(\ref{math:Dan7}).
\end{proof}
%-------------------------------------------------------------------------------

%-------------------------------------------------------------------------------
\begin{corollary}
The Hamiltonian function corresponding to the map $M$ is 
\begin{equation}
\label{math:hamilt}
    \h(\K) = J'(\K).
\end{equation}
\end{corollary}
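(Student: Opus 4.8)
The plan is to read ``the Hamiltonian function corresponding to the map $\M$'' as the autonomous interpolating Hamiltonian $\h$ whose flow, advanced over the conventional per-turn time, reproduces $\M$, normalized exactly as in the second corollary [Eq.~(\ref{math:Dan7})] so that $\h(J)=\nu\,J$ whenever $\nu$ is constant. Its existence on the region foliated by the compact level sets $\K=\const$ around the stable fixed point is supplied by the Arnold--Liouville theorem: in the action-angle variables of Eq.~(\ref{math:twist}) the map $\M$ is the twist map, and the one-degree-of-freedom Hamiltonian $\h=\h(J)$ defined by $\dd\h/\dd J=\nu(J)$ generates the flow $\dot\theta=\nu(J)$, $\dot J=0$, so that advancing it by the per-turn ``time'' $2\pi$ rotates every circle by $2\pi\,\nu(J)$, i.e.\ it reproduces Eq.~(\ref{math:twist}). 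Since the invariant curves are exactly the orbits of this flow, $\h$ is constant on each of them, hence a function of $\K$ alone (equivalently of $J$, since $J=J(\K)$ is monotone near the fixed point).

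With this in hand the proof is short. The first corollary [Eq.~(\ref{math:Dan2})] gives $\nu(\K)=\dd J'/\dd J$, and combining it with the characterization $\dd\h/\dd J=\nu(J)$ recorded above yields $\dd\h/\dd J=\dd J'/\dd J$, so that $\h$ and $J'$ differ by at most an additive constant as functions of $J$, equivalently of $\K$. I would then fix that constant by evaluating both quantities at the stable fixed point: there $q=q'$, so the sector integral $J'(\K)=\frac{1}{2\pi}\int_q^{q'}p(\K,q)\,\dd q$ vanishes, while $\h$ is by convention normalized to vanish at its extremum, which is the fixed point; hence the constant is zero and $\h(\K)=J'(\K)$. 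As a consistency check, for constant $\nu$ this collapses to the second corollary, $\h=\nu\,J=J'$.

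Alternatively, one can bypass action-angle variables and argue directly from the construction in the proof of the Danilov theorem: the Hamiltonian flows of $\h$ and of $\K$ share the same orbits (the curves $\K=\const$), and by the chain rule applied to Hamilton's equations the $\h$-velocity equals $(\dd\h/\dd\K)$ times the $\K$-velocity at every point, so any arc is traversed in $1/(\dd\h/\dd\K)$ times as much $\h$-time as $\K$-time; requiring one iteration of $\M$ to take $\h$-time $2\pi$ then forces $2\pi=\tau(\K)/(\dd\h/\dd\K)$ with $\tau(\K)=\int_q^{q'}(\pd\K/\pd p)^{-1}\,\dd q$, which by Eq.~(\ref{math:Dan5}) equals $2\pi\,\dd J'/\dd\K$, again giving $\dd\h/\dd\K=\dd J'/\dd\K$ and the same conclusion. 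The step I expect to be the real obstacle is therefore not a calculation but a definitional one: making precise what ``the Hamiltonian corresponding to the map'' is, and justifying that it exists, is unique up to the single additive constant, and carries the per-turn time normalization that makes it agree with the linear case. Everything past that point uses only the Arnold--Liouville normal form, the first corollary, and the Leibniz-rule identity Eq.~(\ref{math:Dan5}).
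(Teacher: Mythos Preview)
Your proof is correct and follows exactly the paper's route: combine $\nu=\dd\h/\dd J$ with the first corollary $\nu=\dd J'/\dd J$ to obtain $\h=J'+\const$. You are in fact more careful than the paper's one-line argument---you spell out what ``the Hamiltonian corresponding to the map'' means via the Arnold--Liouville interpolating flow, and you fix the additive constant at the fixed point, whereas the paper's proof simply stops at $\h=J'+\const$.
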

%-------------------------------------------------------------------------------
%-------------------------------------------------------------------------------
\begin{proof}
Since $\nu = \dd\h/\dd J$, one can use Eq.~(\ref{math:Dan2}) to obtain
$\h = J' + \const$.
\end{proof}
%-------------------------------------------------------------------------------

%-------------------------------------------------------------------------------
\begin{corollary}
\begin{equation}
\label{math:Danilov1}
    \nu(\K) = \int_{p}^{p'}
	    \left(\frac{\pd \K}{\pd q}\right)^{-1}
	\,\dd p
	\Bigg/
	\oint
	    \left(\frac{\pd \K}{\pd q}\right)^{-1}
	\,\dd p,
\end{equation}
where the integrals are evaluated along the invariant curve, $\K(q, p)$.
\end{corollary}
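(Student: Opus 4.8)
The plan is to repeat the argument that proves the Danilov theorem, only parametrizing the auxiliary flow by the momentum coordinate instead of the position coordinate. As in that proof, along a solution of the Hamiltonian-type system~(\ref{math:Kham}) the orbit on a level set $\K=\const$ is periodic with period $T(\K)$, and the map $\M$ is recovered after the time $\tau=\nu(\K)\,T(\K)$; hence $\nu(\K)=\tau/T$ with $\tau=\int_q^{q'}\dd t$ and $T=\oint\dd t$. Only the way the time differential $\dd t$ is expressed along the curve needs to change.

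First I would use the second equation of motion in~(\ref{math:Kham}), $\dd P/\dd t=-\pd\K/\pd Q$, which along the invariant curve gives $(\pd\K/\pd q)^{-1}\,\dd p=-\dd t$. Substituting this into $\tau=\int_q^{q'}\dd t$ yields $\tau=-\int_p^{p'}(\pd\K/\pd q)^{-1}\,\dd p$, which is precisely the identity~(\ref{math:Dan3}) used above; applying the same substitution over one complete revolution gives $T=-\oint(\pd\K/\pd q)^{-1}\,\dd p$, the closed contour being the full level set traversed once along the flow. Forming the ratio $\nu(\K)=\tau/T$, the overall minus signs in numerator and denominator cancel and Eq.~(\ref{math:Danilov1}) follows. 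Equivalently, one can start from Eq.~(\ref{math:Danilov}) and simply insert~(\ref{math:Dan3}) in the numerator together with its closed-loop analog in the denominator.

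I do not anticipate a genuine obstacle here, since the statement amounts to a change of integration variable. The one point that deserves care is orientation: both the open arc from $(q,p)$ to $(q',p')$ and the closed loop must be traversed in the sense dictated by the flow of~(\ref{math:Kham}), so that the sign in $(\pd\K/\pd q)^{-1}\,\dd p=-\dd t$ is applied consistently and the quotient is manifestly independent of that choice. As for the theorem itself, one should also keep in mind that the integrand is singular at the (finitely many) turning points where $\pd\K/\pd q=0$; these are integrable, removable singularities, exactly as in the $q$-integration version, and affect neither integral.
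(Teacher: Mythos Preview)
Your proposal is correct and is essentially the same as the paper's argument. The paper's proof simply notes the $p\leftrightarrow -q$ symmetry of the equations~(\ref{math:Kham}) and says the computation of Eq.~(\ref{math:Dan6}) carries over; what you have written is precisely that computation spelled out, with the sign from the second Hamilton equation tracked explicitly and then cancelled in the ratio.
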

%-------------------------------------------------------------------------------
%-------------------------------------------------------------------------------
\begin{proof}
Because of the $p \leftrightarrow -q$ symmetry in Eqs.~(\ref{math:Kham}), the proof
is similar to Eq. (\ref{math:Dan6}).
\end{proof}
%-------------------------------------------------------------------------------

%-------------------------------------------------------------------------------
In order to generalize the Danilov theorem to higher-dimensional integrable maps,
one has to know the variables, where such a map is separated into maps for each degree of freedom.
Below we will consider an example of a 4D map, which is separable in polar coordinates
with two integrals of motion.

%-------------------------------------------------------------------------------
%-------------------------------------------------------------------------------

% EXAMPLES ====================================================================%
%==============================================================================%
%==============================================================================%
\section{Examples}

%-------------------------------------------------------------------------------
In order to employ this theorem in practice, one would need to recall that with
$p=p(\K,q)$, the integrand in Eq.~(\ref{math:Danilov}) is
\begin{equation}
	\left( \frac{\pd\K}{\pd p} \right)^{-1}=\frac{\pd p(\K,q)}{\pd \K}.
\end{equation}
Also, the lower limit of the integral can be chosen to be any convenient value
of $q$, for example 0, as long it belongs to a given level set, $\K(q,p)$.
Finally, the upper limit of the integral, $q'$, is obtained from the
selected $q$ and $p=p(\K,q)$ by iterating the map, $\M(q,p)$.  It is clear that
not all functions $\K(q,p)$ can be inverted analytically to obtain $p=p(\K,q)$.  
This drawback of this method can be overcome by numeric evaluations (see Appendix B). 

%-------------------------------------------------------------------------------
For maps in a special (McMillan) form \cite{mcmillan1971problem},
\begin{equation}
\label{math:McMap1}
    \begin{bmatrix}
        q'	\\ p'
    \end{bmatrix}
    =
    \begin{bmatrix}
        p     \\ -q + f(p)
    \end{bmatrix},
\end{equation}
the convenient choices for integration limits in Eq.~\ref{math:Danilov} are
$(q, p) = (q_0, 0)$ and $(q', p') = (0, -q_0+f(0))$,
Fig.~\ref{fig:partial_action}.b, and 
$(q, p) = (a, a)$ and $(q', p') = (a, -a+f(a))$,
Fig.~\ref{fig:partial_action}.c.
Finally, for twist maps, Eq.~(\ref{math:twist}), the Danilov theorem
Eq.~(\ref{math:Danilov}) gives $\nu$, as expected.

%-------------------------------------------------------------------------------
Let us now consider several non-trivial examples.  Linear maps are presented in Appendix A.

%===============================================================================
%-------------------------------------------------------------------------------
\subsection{McMillan map}

%-------------------------------------------------------------------------------
As our first example, we will consider the so-called McMillan map 
\cite{mcmillan1971problem},
\begin{equation}
\label{math:McMap}
\begin{bmatrix}
q'	\\ p'
\end{bmatrix}
=
\begin{bmatrix}
p	\\ -q + a\,p/(b\,p^2+1)
\end{bmatrix}.
\end{equation}
This map has been considered in detailed in Ref. \cite{Iatrou_2002}.
To illustrate the Danilov theorem, we will limit ourselves to a case with
$b > 0$ and $|a| < 2$, which corresponds to stable motion at small amplitudes.
Mapping~(\ref{math:McMap}) has the following integral:
\begin{equation}
\label{math:McM_Inv}
    \K(q,p) = b\,q^2 p^2 + q^2 + p^2 - a\,q\,p,
\end{equation}
which is non-negative for the chosen parameters.

%------------------------------------------------------------------------------%
We first notice that for small amplitudes, $b\,p^2 \ll 1$, 
this map can be approximated as
\begin{equation}
\label{math:McMap_approx}
\begin{bmatrix}
q'	\\ p'
\end{bmatrix}
\approx
\begin{bmatrix}
p	\\ -q + a\,p - a\,b\,p^3 + a\,b^2\,p^5 - ...
\end{bmatrix},
\end{equation}
and its zero-amplitude rotation number is \cite{courant1958theory}
\begin{equation}
	\nu(0) = \frac{1}{2\,\pi}\,\arccos\frac{a}{2}.
\end{equation}
At large amplitudes ($b\,p^2 \gg 1$), the rotation number becomes $0.25$.
We will now evaluate the rotation number analytically, using Eq.~(\ref{math:Danilov}):
Let us define a parameter,
\begin{equation}
w(\K) = 
\frac{1}{\sqrt{2}}\sqrt{1+\frac{d(\K)}{\sqrt{d(\K)^2+4\,\K\,b}}},
\end{equation}
which spans from 0 to 1 and where $d(\K)=a^2/4 + \K\,b - 1$.
Then, the rotation number can be expressed through Jacobi elliptic functions as 
follows:
\begin{equation}
\label{math:McMRotNum}
\nu(\K) = \frac{1}{4\,\mathrm{K}(w)}
	\ads \left( {\left( d(\K)^2+4\,\K\,b \right ) ^{-1/4}} , w \right),
\end{equation}
where $\mathrm{K}(w)$ is the complete elliptic integral of the first kind and 
the inverse Jacobi function, $\ads(x,w)$, is defined as follows
\begin{equation}
\ads(x,w) = \int_x^\infty \frac{\dd t}{\sqrt{(t^2+w^2)(t^2+w^2-1)}}.
\end{equation}
The rotation number, Eq.~(\ref{math:McMRotNum}), has the following series expansion:
\begin{equation}
\label{math:McMRotNumExpansion}
\nu(\K) \approx \nu(0) + \frac{3}{2\,\pi}\,\frac{b\,a}{\sqrt{(4-a^2)^3}}\K.
\end{equation}
%------------------------------------------------------------------------------%
%-------------------------------------------------------------------------------
\begin{figure}[t!]\centering
\includegraphics[width=\linewidth]{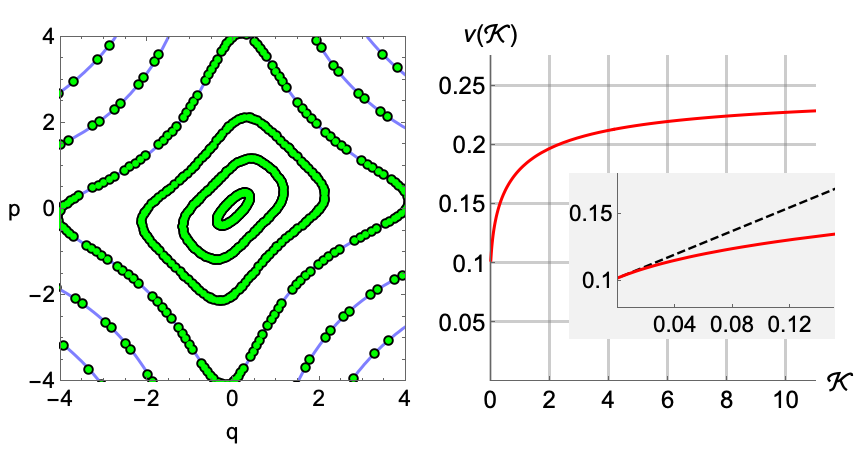}
\caption{\label{fig:McM}
	The left plot contains iterations (green dots) of the McMillan map
	($a = 1.6$, $b=1$).
	Constant level sets of the invariant are shown with blue lines.
	The right plot is the rotation number, Eq. (\ref{math:McMRotNum}), as a function of its integral, $\K$. 
	The inset shows the linear approximation, Eq. (\ref{math:McMRotNumExpansion}). 
	} 
\end{figure}
%-------------------------------------------------------------------------------
Figure~\ref{fig:McM} shows an example of the rotation number, for the case of
$a=1.6$ and $b=1$ ($\nu(0)\approx 0.102$), as a function of
integral, $\K$.

%-------------------------------------------------------------------------------
The McMillan invariant (\ref{math:McM_Inv}) also allows for an analytic evaluation of the 
action integral (\ref{math:J}).  We will omit the lengthy expressions, but will only present
a small-amplitude series expansion:
\begin{equation}
\label{math:action_mcm}
    J(\K) \approx \frac{\K}{\sqrt{4-a^2}} -
        \frac{b\,(2+a^2)\,\K^2}{\sqrt{(4-a^2)^5}}.
\end{equation}
Finally, we can also present a small-amplitude series expansion of the rotation number (\ref{math:McMRotNum}):
\begin{equation}
\label{math:McMRotNumApp}
\nu(J) \approx \nu(0) + \frac{3}{2\,\pi}\,\frac{b\,a}{4-a^2}J.
\end{equation}

%===============================================================================
%-------------------------------------------------------------------------------
\subsection{Cubic map}

%-------------------------------------------------------------------------------
\begin{figure}[t!]\centering
\includegraphics[width=\linewidth]{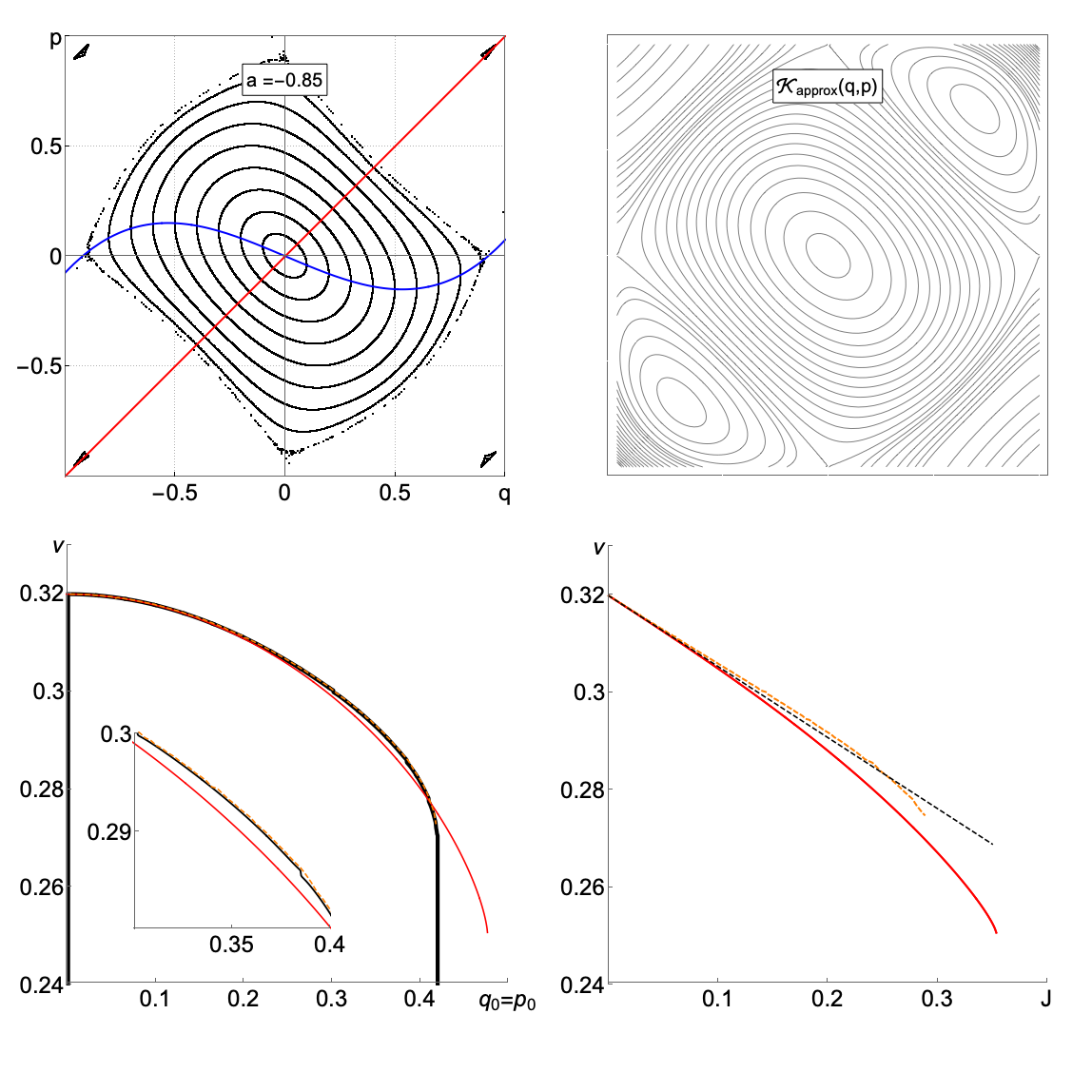}
\caption{\label{fig:CubicMap}
	The top row: phase-space trajectories of a cubic map, obtained by tracking with $a=-0.85$ (left plot) and level sets of the approximate invariant (\ref{math:OctApproxK}) (right plot), on the same scale.
	The red and blue lines in the top left plot corresponds to symmetry lines $p = q$ and $p = (a\,q+\epsilon\,q^3)/2$ respectively.
	The bottom row: the left plot shows the rotation number as a function of initial conditions in the form $q_0 = p_0$, by using the Eq.~(\ref{math:nu}) (black solid line), and by using the Danilov theorem, Eq.~(\ref{math:Dan2}) numerically (orange dashed). The red solid line corresponds to the rotation number obtained from the
	approximate invariant (\ref{math:OctApproxK}) using the Danilov theorem as well.
	The right bottom plot shows the dependence of $\nu$ as a function of action $J$,
	from tracking (orange dashed) and from the approximate invariant (\ref{math:OctApproxK}) (red solid).
	}
\end{figure}
%-------------------------------------------------------------------------------

%-------------------------------------------------------------------------------
As our second example, we will consider a non-integrable H\'enon cubic map \cite{Hénon:218956, Dullin}:
\begin{equation}
\label{math:OctMap}
\begin{bmatrix}
q'	\\ p'
\end{bmatrix}
=
\begin{bmatrix}
p	\\ -q + a\,p + \epsilon\,p^3
\end{bmatrix}.
\end{equation}
This map is well-known in accelerator physics as a symplectic octupole map.
At small amplitudes this map is linear and the rotation number is
\begin{equation}
	\nu \approx \frac{1}{2\,\pi}\,\arccos \left(\frac{a}{2}\right).
\end{equation}
At large amplitudes this map becomes chaotic and unstable.
Let us propose an approximate integral (the exact integral does not exist since
it is a non-integrable map).
%------------------------------------------------------------------------------%
\begin{equation}
\label{math:OctApproxK}
\begin{array}{l}
\displaystyle
\K_\text{c}(q,p) = p^2 + q^2 - a\,p\,q - \frac{\epsilon}{a}\,p^2q^2  \\[0.3cm]
\displaystyle \qquad +
    \frac{7\,\epsilon}{5\,a\,(4 - a^2)}\,
    \left( p^2 + q^2 - a\,p\,q \right)^2 +
    O\left( \epsilon^2 \right).
\end{array}
\end{equation}
The derivation of this approximate integral goes beyond the scope of this
article and will be described in subsequent publications.
For this illustration, the reader can verify by inspection that this integral is
approximately conserved, near the origin.
We will now use the Danilov theorem to evaluate the rotation number of this map
for various initial conditions with $q_0=p_0$.
Figure \ref{fig:CubicMap} shows the exact (numeric), Eq.~(\ref{math:nu}), and the approximate rotation number, calculated from (\ref{math:OctApproxK}) and (\ref{math:OctMap}) using the Danilov theorem, Eq.~(\ref{math:Danilov}).

A small-amplitude series expansion of the rotation number is:
\begin{equation}
\label{math:HenRotNumApp}
\nu(J) \approx \nu(0) - \frac{3}{2 \pi} \frac{\epsilon}{4-a^2}J,
\end{equation}
which is the same as in \cite{Dullin} and similar to Eq. (\ref{math:McMRotNumApp}).

%===============================================================================
%-------------------------------------------------------------------------------
\subsection{4-D integrable map}

%-------------------------------------------------------------------------------
In this section we will sketch out an example of how to use the Danilov theorem
to analyze an integrable multi-dimensional map.
Consider the following map, which can be realized in accelerators by employing
the so-called electron lens~\cite{ElectLens1, ElectLens2, Lobach:IPAC2018-THPAF071},
\begin{equation}
\label{math:4dMap}
\begin{bmatrix}
x'	\\[0.05cm] p_x' \\[0.05cm] y' \\[0.05cm] p_y'
\end{bmatrix}
=
\begin{bmatrix}
\alpha_x x + \beta\, p_x	                            \\
-\gamma_x x - \alpha_x\,p_x + \frac{a\,x'}{b\,r'^2+1}    \\
\alpha_y y + \beta\, p_y	                            \\
-\gamma_y y - \alpha_y\,p_y + \frac{a\,y'}{b\,r'^2+1}
\end{bmatrix},
\end{equation}
where $r^2 = x^2 + y^2$, $\beta\,\gamma_x=1+\alpha_x^2$,
$\beta\,\gamma_y=1+\alpha_y^2$, with $\alpha_x$, $\alpha_y$, $a$, $b$ and $\beta$ being some arbitrary parameters.
This map has two integrals of motion in involution (having a vanishing Poisson
bracket):
\begin{equation}
\label{math:Ang_rot_Inv}
    L = (\alpha_y-\alpha_x)\,x\,y + \beta(x\,p_y - y\,p_x)
\end{equation}
and
\begin{equation}
\label{math:McM_rot_Inv}
    \K = \Big(b + \frac{1}{r^2}\Big)\,T^2+\beta\,a\,T+r^2 + \frac{L^2}{r^2},
\end{equation}
where $T = \alpha_xx^2 + \alpha_yy^2 +\beta\,r\,p_r$ and $p_r=(x\,p_x+y\,p_y)/r$.
In order to employ the Danilov theorem, we must rewrite the map (\ref{math:4dMap})
in new variables, where this map is separated into two maps.
Such variables exist by virtue of this map being integrable.
We first notice that by introducing new variables,
\begin{equation}
    \label{math:sep_variables}
\begin{matrix}
\Tilde{x} = x/\sqrt{\beta}	
\\ \Tilde{p}_{x} = x\,\alpha_x/\sqrt{\beta} + p_x\sqrt{\beta}  
\\ \Tilde{y} = y/\sqrt{\beta}
\\ \Tilde{p}_{y} = y\,\alpha_y/\sqrt{\beta} + p_y\sqrt{\beta},
\end{matrix}
\end{equation}
the map (\ref{math:4dMap}) becomes symmetric in $\Tilde{x}$ and $\Tilde{y}$ with $\Tilde{a}=a\sqrt{\beta}$ and $\Tilde{b}=b\beta$.
The resulting map is separable in polar coordinates, $r$ and $\theta$, such that $x=r\,\cos(\theta)$ and
$y=r\,\sin(\theta)$, where we omitted the tilde ($\,\Tilde{}\,$) sign for clarity.  The resulting map is
\begin{equation}
\label{math:McMap_radial}
\begin{bmatrix}
r'	\\[0.3cm] p_r' \\[0.3cm] \theta' \\[0.3cm] p_\theta'
\end{bmatrix}
=
\begin{bmatrix}
\sqrt{p_r^2+\frac{p_{\theta}^2}{r^2}}	        \\[0.25cm]
-p_r \frac{r}{r'} + \frac{a\,r'}{b\,r'^2+1}     \\[0.25cm]
\theta + \arctan \frac{p_\theta}{r\,p_r}        \\[0.25cm]
p_\theta
\end{bmatrix},
\end{equation}
where the angular momentum $p_{\theta} = x\,p_y - y\,p_x = \const$ is the integral
of the motion.
An additional integral is
\begin{equation}
\label{math:McM_rad_Inv}
    \K(r,p_r, p_{\theta}) = b\,r^2 p_r^2 + r^2 + p_r^2 - a\,r\,p_r + \frac{p_{\theta}^2}{r^2}.
\end{equation}
Now we will use the Danilov theorem to obtain two unknown rotation numbers, $\nu_{\theta}$ and $\nu_r$. 
We first notice that $\K$ does not depend on $\theta$ and thus can be used to evaluate $\nu_r$ in Eq.~(\ref{math:Dan6})
directly, by treating $p_{\theta}$ as a parameter.
\begin{equation}
\label{math:freq_radial}
\begin{array}{l}
\ds\nu_r(\K, p_{\theta}) = \frac{\tau}{T_r} = 
    \frac{\int_{r}^{r'}
        \left(\frac{\pd \K}{\pd p_r}\right)^{-1}
    \,\dd r}
    {\oint
        \left(\frac{\pd \K}{\pd p_r}\right)^{-1}
    \,\dd r}                                \\[0.8cm]
\ds\qquad =
\mathrm{F}\left[
        \arcsin \sqrt{\frac{\zeta_3-\zeta_1}{\zeta_3+1}},
        \kappa
    \right]\Big/\left(2\,\mathrm{K}\left(
        \kappa
    \right)\right),
\end{array}
\end{equation}
where $\mathrm{K}(\kappa)$ is the complete elliptic integral of the first kind, 
$\mathrm{F}(\phi,\kappa)$ is the incomplete elliptic integral of
the first kind, elliptic modulus $\kappa$ is given by
\[
\kappa = \sqrt{\frac{\zeta_3-\zeta_2}{\zeta_3-\zeta_1}},
\]
and $\zeta_1 < 0 < \zeta_2 < \zeta_3$ are the roots of the polynomial
\[
\mathcal{P}_3(\zeta) =
    - \zeta^3 +
    \left[ \K + \left(\frac{a}{2}\right)^2 - 1 \right]\,\zeta^2+
    (\K - p_\theta^2)\,\zeta -
    p_\theta^2.
\]
In order to evaluate the angular rotation number, $\nu_\theta$, we first notice that there is some uncertainty 
as to which integral of the motion to employ: one can add an arbitrary function of $p_\theta$ to $\K$, $\K'=\K+f \left(p_\theta \right )$, to obtain another integral.  
This new integral of motion, $\K'$, gives the same $\nu_r$, but modifies the angular motion by some unknown linear function of time:
\begin{equation}
    \label{angular_motion}
    \frac{d \theta}{d t} = \frac{d \K'}{d p_\theta} = \frac{d \K}{d p_\theta} + f'\left(p_\theta \right ),
\end{equation}
\begin{equation}
    \label{angular_motion1}
    \theta(t) = \int \frac{d \K}{d p_\theta} dt + f'\left(p_\theta \right )t.
\end{equation}
Fortunately, we can resolve this uncertainty by using the angular portion of the map, Eq.~(\ref{math:McMap_radial}).
By its definition, the angular rotation number is
\begin{equation}
    \label{angular_freq1}
    \nu_\theta = \nu_r \frac{\Delta \theta \left ( T_r \right )}{2\,\pi},
\end{equation}
where 
\begin{equation}
    \label{angular_freq2}
    \Delta \theta \left ( T_r \right ) = \oint \frac{d \K}{d p_\theta} \left ( \frac{\partial \K}{\partial {p_r}} \right )^{-1} dr + k\,T_r,
\end{equation}
$k$ is an unknown coefficient and $T_r$ is the period of the radial motion,
\begin{equation}
\label{radial_period}
    T_r = \oint
        \left(\frac{\pd \K}{\pd p_r}\right)^{-1}
    \,\dd r.
\end{equation}
To determine the coefficient $k$ we will notice from
Eq.~(\ref{math:McMap_radial}) that $\Delta \theta ( \tau ) = \arctan \left ( \frac{p_\theta}{r\,p_r} \right )$.  Thus,
\begin{equation}
\label{coeff_k}
    k = \frac{1}{\tau} \left ( \arctan \left ( \frac{p_\theta}{r\,p_r} \right ) - \int_r^{r'} \frac{d \K}{d p_\theta} \left ( \frac{\partial \K}{\partial {p_r}} \right )^{-1} dr \right ) 
\end{equation} 
with 
\begin{equation}
\label{radial_time_step}
    \tau = \int_r^{r'}
        \left(\frac{\pd \K}{\pd p_r}\right)^{-1}
    \,\dd r.
\end{equation}
Now, recalling that $\nu_r = \tau / {T_r}$, we finally obtain
\begin{equation}
    \label{angular_freq3}
    \begin{array}{l}
         \ds  \nu_\theta = \frac{\nu_r }{2\,\pi} \oint \frac{d \K}{d p_\theta} \left ( \frac{\partial \K}{\partial {p_r}} \right )^{-1} dr \, +  \\ [0.35cm]
         \ds \frac{1}{2 \pi} \left ( \arctan \left ( \frac{p_\theta}{r\,p_r} \right ) - \int_r^{r'} \frac{d \K}{d p_\theta} \left ( \frac{\partial \K}{\partial {p_r}} \right )^{-1} dr \right ).
    \end{array}
\end{equation}
After some math, this expression can be rewritten as
\begin{equation}
\label{math:freq_angular}
\begin{array}{l}
\ds\nu_\theta(\K, p_{\theta}) =
    \frac{\Delta}{2\,\pi}\,\left[
    \nu_r - \frac{\Delta'}{\Delta} +
    \frac{\arctan\left(\frac{2\,p_\theta}{a}\,\frac{\zeta_3+1}{\zeta_3}\right)}{\Delta},
    \right]
\end{array}
\end{equation}
where
\[
\begin{array}{l}
\ds \Delta =
\frac{2\,p_\theta}{\zeta_3\,\sqrt{\zeta_3 - \zeta_1}}\,
    \Pi\left[
        \kappa\,
        \bigg|
        \frac{\zeta_3-\zeta_2}{\zeta_3}
    \right],                               \\[0.35cm]
\ds \Delta' =
    \frac{p_\theta}{\zeta_3\,\sqrt{\zeta_3 - \zeta_1}}\,
    \Pi\left[
        \arcsin \sqrt{\frac{\zeta_3-\zeta_1}{\zeta_3+1}},
        \kappa\,
        \Bigg|
        \frac{\zeta_3-\zeta_2}{\zeta_3}
    \right],
\end{array}
\]
and, $\Pi(\kappa\,|\alpha)$ and $\Pi(\phi,\kappa\,|\alpha)$ are the 
complete and the incomplete elliptic integrals of the third kind,
respectively.  One can note that for a linear 4D map ($b = 0$), we have $\nu_r = 2\,\nu_\theta$ for any value of $p_\theta$.
Fig.~\ref{fig:McM4D1} shows an example of the radial and the angular rotation numbers as a function of $\K$ for various values of $p_\theta$.

%-------------------------------------------------------------------------------
\begin{figure}[b!]\centering
\includegraphics[width=0.925\linewidth]{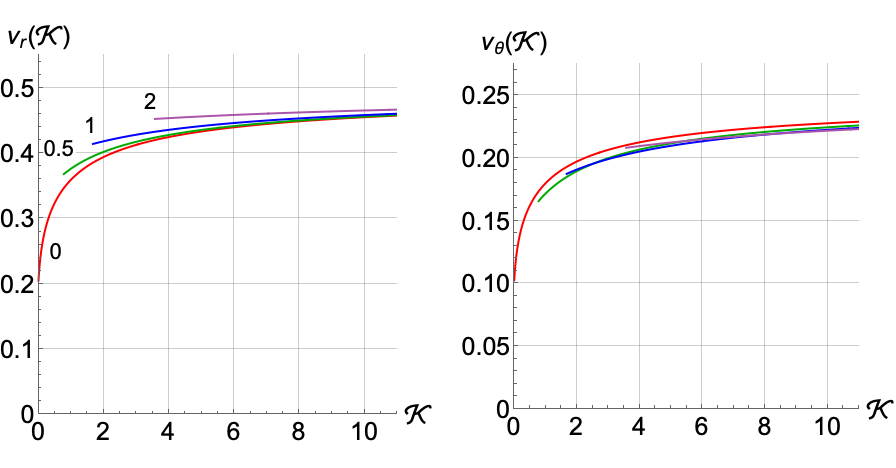}
\caption{\label{fig:McM4D1}
	Radial (left) and angular (right) rotation numbers as a function
	of the first integral of the map, $\K$, for different values of its second
	integral, $p_\theta$ (shown with color labels). The map	parameters are $b = 1$ and $a = 1.6$.
	Note that for $p_\theta = 0$,  $\nu_r = 2 \, \nu_\theta$, as expected, and equals to the frequency $\nu$ from the one-dimensional example of Fig.~\ref{fig:McM}
	} 
\end{figure}
%-------------------------------------------------------------------------------

% ACKNOWLEDGMENTS =============================================================%
%==============================================================================%
%==============================================================================%
\section{Summary}

%-------------------------------------------------------------------------------
In this paper we demonstrated a general and exact method of how to find a
Poincar\'e rotation number for integrable symplectic maps of a plane and its
connection to accelerator physics.
It complements the discrete Arnold-Liouville theorem for
maps~\cite{veselov1991integrable,arnold1968ergodic} 
and permits the analysis of dynamics for integrable systems.
Eq.~(\ref{math:hamilt}) also permits to express the Hamilton function of a given
integrable map explicitly.
Several examples were presented in our paper.
These examples demonstrate that the Danilov theorem is a powerful tool.
The McMillan integrable map is a classic example of a nonlinear integrable
discrete-time system, which finds applications in many areas of physics,
including accelerators~\cite{Antipov_2017, Lobach:IPAC2018-THPAF071}.
It is a typical member of a wide class of area-preserving transformations
called a twist map~\cite{meiss1992symplectic}.
For non-integrable maps, which are also very common in accelerator science,
this new theorem could allow for an approximate evaluation of rotation numbers,
provided there exists an approximate integral of motion, like
Eq.~(\ref{math:OctApproxK}).

% ACKNOWLEDGMENTS =============================================================%
%==============================================================================%
%==============================================================================%
\section{Acknowledgments}

%-------------------------------------------------------------------------------
The authors would like to thank Jeffrey Holmes and Stanislav Baturin for carefully 
reading this manuscript and for their helpful comments. This research is supported 
by Fermi Research Alliance, LLC under Contract No. 
DE-AC02-07CH11359 with the U.S. Department of Energy and by the University of Chicago.

%==============================================================================%
%==============================================================================%
%==============================================================================%
\appendix
\section{Linear maps}
In this appendix we will consider two examples of linear 
maps and we will use Eq.~(\ref{math:Danilov}) for one and Eq.~(\ref{math:Dan7}) for the second one.

%===============================================================================
%-------------------------------------------------------------------------------
\subsection{Linear accelerator map}

%-------------------------------------------------------------------------------
Consider a linear symplectic map,
\begin{equation}
\label{math:LMap}
    \begin{bmatrix}
        q'	\\ p'
    \end{bmatrix}
    =
    \begin{bmatrix}
     a & b		\\
     c & d
    \end{bmatrix}
    \begin{bmatrix}
        q	    \\ p
    \end{bmatrix},
\end{equation}
with $a\,d-b\,c=1$ and $|a+d| \leq 2$.
This map is very common in accelerator physics and has been described in 
\cite{courant1958theory}.
The rotation number (the betatron frequency) for this map is well known:
\begin{equation}
\label{math:Nu_linear}
	\nu = \frac{1}{2\,\pi}\,\arccos\frac{a+d}{2}.
\end{equation}
To obtain this equation using the Danilov theorem, we will recall that this map
has the following Courant-Snyder integral (invariant): 
\begin{equation}
	\K = c\,q^2 + (d-a)\,q\,p - b\,p^2.
\end{equation}
Let us assume that $c > 0$, then $b \leq 0$ and $\K(q,p) \geq 0$ for any $q$ and
$p$.
From this, we obtain 
\begin{equation}
    \left( \frac{\pd\K}{\pd p} \right)^{-1} = \frac{\pd p}{\pd \K} =
    \frac{\pm 1}{\sqrt{\left[ (a+d)^2-4\right]\,q^2-4\,b\,\K}}.
\end{equation}
We will use 
\begin{equation}
(q,p) = (\sqrt{\K/c}, 0)
\end{equation}
and
\begin{equation}
(q',p') = (b\,\sqrt{\K/c},\sqrt{\K\,c})
\end{equation}
After a straightforward evaluation of integrals in Eq.~(\ref{math:Danilov}), we
obtain:
\begin{equation}
	\nu = \frac{1}{2\,\pi}\,\arccos \frac{a+d}{2},
\end{equation}
same is in Eq.~(\ref{math:Nu_linear}).

%===============================================================================
%-------------------------------------------------------------------------------
\subsection{\label{sec:Ex2}Brown map}

%------------------------------------------------------------------------------%
As a second example we will consider 
the Brown map \cite{brown1983map,brown1993map}, $\M_\mathrm{B}$,
\begin{equation}
\label{math:Brown}
    \begin{bmatrix}
        q'	\\ p'
    \end{bmatrix}
    =
    \begin{bmatrix}
        p     \\ -q + |p|
    \end{bmatrix},
\end{equation}
which has the following integral,
%------------------------------------------------------------------------------%
\begin{eqnarray}
\label{math:Brown1}
\nonumber
\K(q,p) & = & \frac{1}{8} \Bigg(
q +
p +
\big| q - |p| \big| +
\big| p - |q| \big| +					\\\nonumber
&&
2\,\Big| q - \big| p - |q| \big| \Big| + 	
2\,\Big| p - \big| q - |p| \big| \Big| + 		\\\nonumber
&&
\bigg| q - |p| + \Big| p - \big| q - |p| \big| \Big| \bigg| +	\\
&&
\bigg| p - |q| + \Big| q - \big| p - |q| \big| \Big| \bigg| \Bigg).
\end{eqnarray}
%------------------------------------------------------------------------------%

%------------------------------------------------------------------------------%
The map has only one stable fixed point, located at the origin, with $\K=0$.
Constant level sets of $\K>0$ are polygons, geometrically similar to each other, with 9 sides, labeled by Roman
numerals, see Fig.~\ref{fig:Knuth2}.a.
All orbits belonging to these levels are periodic with
\begin{equation}
    	\M_\mathrm{B}^9(q,p) = (q,p),
\end{equation}
and in fact, they are permutation 9-cycles such that
\begin{eqnarray*}
\ldots\rightarrow
\text{I}		&\rightarrow&
\text{III}	\rightarrow
\text{V}	\rightarrow
\text{VII}	\rightarrow
\text{IX}	\rightarrow			\\
			&\rightarrow&
\text{II}	\rightarrow
\text{IV}	\rightarrow
\text{VI}	\rightarrow
\text{VIII}	\rightarrow
\text{I}	\rightarrow			\ldots.
\end{eqnarray*}

%-------------------------------------------------------------------------------
\begin{figure}[th!]\centering
\includegraphics[width=\linewidth]{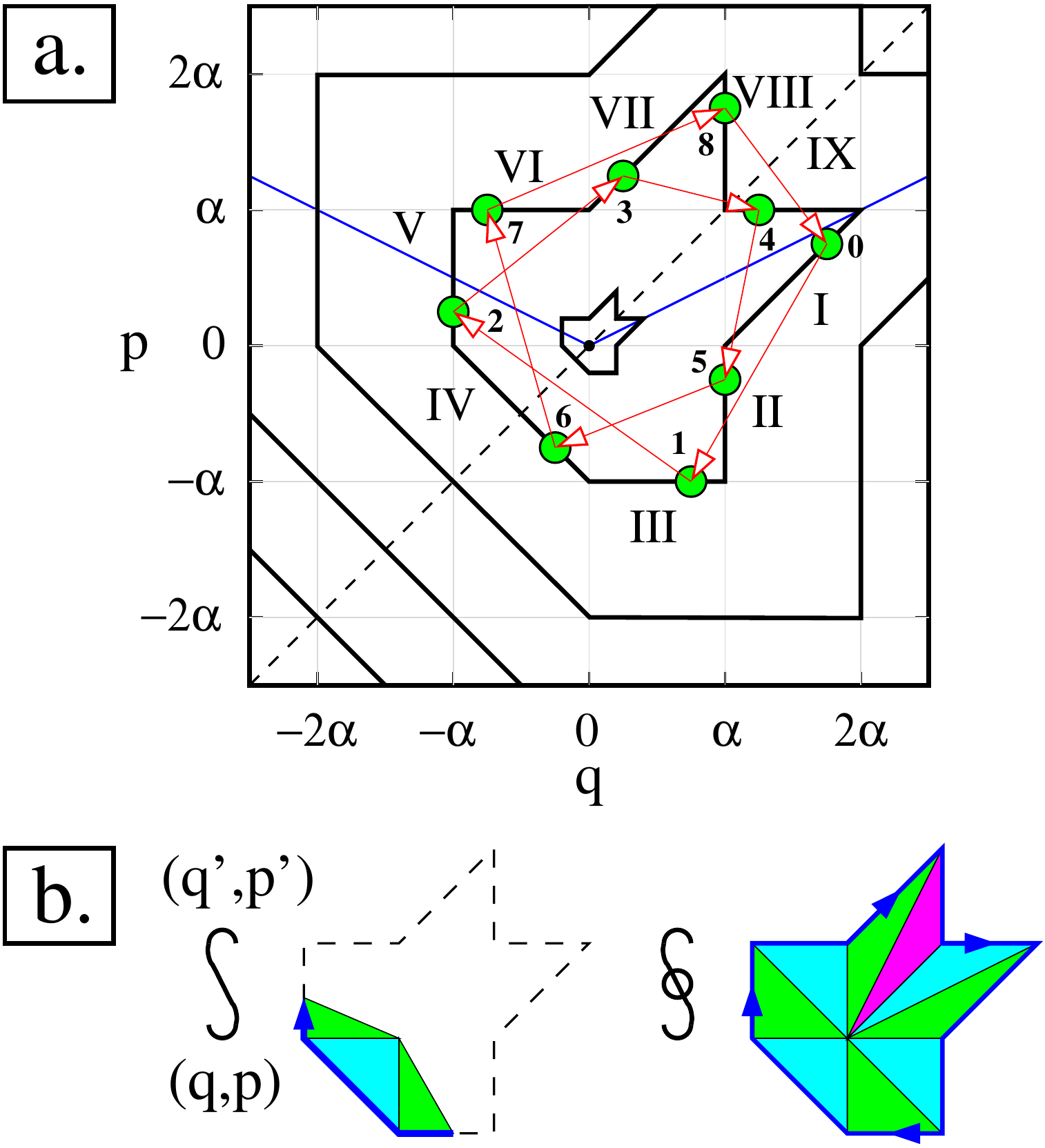}
\caption{Brown map.
	(a.) Constant level sets of the invariant, $\K(q,p) = \const$
	(black solid polygons).
	Dashed black line $p=q$ and blue line $p = \frac{1}{2}\,|q|$ illustrate
	two reflection symmetries of the invariant polygons.
	Line segments are labeled with Roman numerals.
	Green points are an example of a 9-cycle orbit, where the Arabic numerals show the iteration number.
	(b.) An example of possible contour of integration for the numerator and denominator in Danilov theorem.
	}
\label{fig:Knuth2}
\end{figure}
%-------------------------------------------------------------------------------
Since it is a linear map ($\nu = \const$ for all orbits), we will use Eq. (\ref{math:Dan7}) to determine its rotation number.  
It is obvious from Fig. \ref{fig:Knuth2}.b that $J = 4.5\,\alpha$, while $J' = 1 \,\alpha$, 
where $\alpha$ is some arbitrary scale parameter, resulting in $\nu = \frac{2}{9}$.

%------------------------------------------------------------------------------%
\section{Numerical procedure for Danilov theorem}

%------------------------------------------------------------------------------%
In this appendix we will consider two numerical procedures, which can be employed
in order to use Eq.~(\ref{math:Dan2}) for mappings in McMillan form when only the 
mapping equation is known or when we have an approximate (or an exact) invariant of 
the motion but we can not compute action integrals analytically.

%------------------------------------------------------------------------------%
We will start with the case when we have only the mapping equations.
As a first step we will rewrite the map in polar coordinates
\[
    q = r\,\cos\phi,    \qquad\qquad    p = r\,\sin\phi.
\]
Then we will iterate for various initial conditions $q_\text{ini}^{(k)}$,
let say in a form of $q_\text{ini}^{(k)} = q_0^{(k)} = p_0^{(k)}$, so that we have
a collection of points in a form
\[
     (r_0^{(k)},\phi_0^{(k)}),(r_1^{(k)},\phi_1^{(k)}),(r_2^{(k)},\phi_2^{(k)}),
     \ldots,(r_n^{(k)},\phi_n^{(k)}).
\]
We can then sort each orbit such that
\[
    \widetilde{\phi}_0^{(k)} < \widetilde{\phi}_1^{(k)} <
    \widetilde{\phi}_2^{(k)} < \ldots <
    \widetilde{\phi}_n^{(k)},
    \vspace{0.1cm}
\]
where $(\widetilde{r}_i^{(k)},\widetilde{\phi}_i^{(k)})$ are the points of a new sorted $k$-th orbit.
Now, for each orbit we can compute the action and the partial action numerically as
\begin{equation}
    J^{(k)} = \frac{1}{2\,\pi}\,\sum_{i=0}^n
        \frac{\left(\widetilde{r}_i^{(k)}\right)^2}{2}\,
        \left[\widetilde{\phi}_i^{(k)} - \widetilde{\phi}_{i-1}^{(k)}\right]
\end{equation}
and
\begin{equation}
    J'^{(k)} = \frac{1}{2\,\pi}\,\sum_{\pi/2 < \widetilde{\phi}_i^{(k)} < \pi}
        \frac{\left(\widetilde{r}_i^{(k)}\right)^2}{2}\,
        \left[\widetilde{\phi}_i^{(k)} - \widetilde{\phi}_{i-1}^{(k)}\right]
\end{equation}
respectively.
Finally, using the Danilov theorem, we can find the rotation number as a numerical derivative
\begin{equation}
    \nu^{(k)} = \frac{J'^{(k+1)} - J'^{(k)}}{J^{(k+1)} - J^{(k)}}.
\end{equation}

%------------------------------------------------------------------------------%
If one would like to apply the Danilov theorem directly to an approximate or exact invariant
of motion, we can proceed in a similar manner.
First, we rewrite the invariant of motion in polar coordinates,
$\K_\text{approx}(r,\phi)$.
Then, for different values $\K_\text{approx}^{(k)}$ we will numerically solve $n$ equations
\begin{equation}
     \K_\text{approx}(r,\phi^{(k)}_i) = \K_\text{approx}^{(k)}
\end{equation}
with $\phi^{(k)}_i = 2\,\pi\,i/n$ and $i=0,1,\ldots,n-1$.
Denoting the smallest positive root of equation above as
$r^{(k)}_i$, we can find action and partial actions as
\begin{equation}
    J^{(k)} = \frac{1}{2\,\pi}\,\sum_{i=0}^{n-1}
        \frac{\left(r_i^{(k)}\right)^2}{2}\,
        \left[\phi_i^{(k)} - \phi_{i-1}^{(k)}\right]
\end{equation}
and
\begin{equation}
    J'^{(k)} = \frac{1}{2\,\pi}\,\sum_{\pi/2 < \phi_i^{(k)} < \pi}
        \frac{\left(r_i^{(k)}\right)^2}{2}\,
        \left[\phi_i^{(k)} - \phi_{i-1}^{(k)}\right],
\end{equation}
along with the rotation number 
\begin{equation}
    \nu^{(k)} = \frac{J'^{(k+1)} - J'^{(k)}}{J^{(k+1)} - J^{(k)}}.
\end{equation}

% The \nocite command causes all entries in a bibliography to be printed out
% whether or not they are actually referenced in the text. This is appropriate
% for the sample file to show the different styles of references, but authors
% most likely will not want to use it.
%\nocite{*}
\bibliographystyle{apsrev4-1}
\bibliography{bibfile}% Produces the bibliography via BibTeX.

\end{document}